\documentclass[11pt]{article}

\usepackage[all]{xy}           
\usepackage{amsmath,amsfonts,amssymb,amsthm,textcomp}
\usepackage{eucal}
\usepackage{epsfig}



\def\v #1{\vert #1\vert}             

\def\m #1 #2{(-1)^{{\v #1} {\v #2}}} 



\theoremstyle{plain}
\newtheorem{theorem}{Theorem}

\newtheorem{proposition}[theorem]{Proposition}

\theoremstyle{definition}
\newtheorem{definition}[theorem]{Definition}

\def\<#1>{\langle#1\rangle}


\usepackage{epsfig}
\usepackage{amsmath}
\usepackage{epic}
\usepackage{amssymb}
\usepackage{fancybox}
\usepackage{wrapfig}
\usepackage[font=footnotesize,labelsep=period]{caption}

\usepackage{float}

\usepackage{booktabs}

\usepackage[all]{xy}

\usepackage{fancyhdr}

\usepackage{fancyhdr}

\usepackage{appendix}

\begin{document}

\centerline{\Large \bf A geometric approach to solve}\vskip 0.25cm
\centerline{\Large \bf  time dependent and dissipative Hamiltonian systems}

\medskip
\medskip

\centerline{M. de Le\'on and C.
Sard\'on}
\vskip 0.5cm
\centerline{Instituto de Ciencias Matem\'aticas, Campus Cantoblanco}\vskip 0.2cm
\centerline{Consejo Superior de Investigaciones Cient\'ificas}
\vskip 0.2cm
\centerline{C/ Nicol\'as Cabrera, 13--15, 28049, Madrid. SPAIN}

\begin{abstract}

In this paper, we apply the geometric Hamilton--Jacobi theory to obtain solutions of Hamiltonian systems in Classical Mechanics,
that are either compatible with a cosymplectic or a contact structure. As it is well known, the first structure plays a central role
in the theory of time-dependent Hamiltonians, whilst the second is here used to treat classical Hamiltonians including dissipation terms.

On the other hand, the interest of a geometric Hamilton--Jacobi equation is the primordial observation that a Hamiltonian vector field $X_{H}$ can be projected into the configuration manifold by means of a 1-form $dW$, then the integral curves of the projected
vector field $X_{H}^{dW}$can be transformed into integral curves of $X_{H}$ provided that $W$ is a solution of the Hamilton--Jacobi equation.

In this way, we use the geometric Hamilton--Jacobi theory to derive solutions of physical systems with a Hamiltonian formulation. A new expression for a geometric
Hamilton Jacobi equation is obtained for time dependent Hamiltonians described with the aid of a cosymplectic structure. Then, another
expression for the Hamilton Jacobi equation is retrieved for Hamiltonians with frictional terms described through contact geometry.
Both approaches shall be applied to physical examples. 
\end{abstract}

\section{Introduction}
%

In this paper we are concerned with {almost cosymplectic structures} and their application in classical Hamiltonian Mechanics.
By an {\it almost cosymplectic structure} we understand a $2n+1$-dimensional manifold equipped with a one-form $\eta$ and a two-form $\Omega$ such that $\eta\wedge \Omega^n$ is a volume form.
In particular, we will study the case of {\it cosymplectic manifolds} \cite{Blair,Cape,LeonSara,LeonTuyn} and {\it contact manifolds} \cite{Blair,boothwang,etnyre,Godbillon}.
Cosymplectic manifolds have shown their usefulness in theoretical Physics, as in gauge theories of gravity, branes and string theory \cite{Becker,Cham,Hitchin}. 
Among the early studies of cosymplectic manifolds we mention A. Lichnerowicz \cite{Lich2,Lich}, who studied the Lie algebra of infinitesimal automorphisms of a 
cosymplectic manifold, in analogy with the symplectic case. 
But since very foundational papers also by P. Libermann, very sporadic papers have appeared on cosymplectic settings. It is a current act of will to provide surveys on cosymplectic geometry due to their lack \cite{Cape,LeonChineaMarrero}. Posteriously,
some works have endowed cosymplectic manifolds with a Riemannian metric, what made of them the so-called {\it coK\"ahler manifolds} \cite{oku}. These
are the odd dimensional counterpart of K\"ahler manifolds.
Another important role of the cosymplectic theory is the reduction theory to reduce time dependent Hamiltonians by symmetry groups \cite{Albert,CLMMdD,CLM}.

Our particular interest in cosymplectic structures resides in their usage in the description of time dependent Mechanics. They are present
in numerous formulations of classical regular Lagrangians \cite{LeonMarinMarrero}, Hamiltonian systems \cite{Laci} or Tulczyjew-like descriptions \cite{LeonMarrero} in terms
of Lagrangian submanifolds \cite{LMM}.

Nonetheless, there have been more written monographs on contact geometry. The interest in contact structures roots in their applications 
in partial differential equations as in Thermodynamics \cite{rajeev} or Geometric Mechanics \cite{IboLeonMarmo},
geometric Optics \cite{CariNasarre,hamil1,hamil2}, geometric quantization \cite{rajeev} and applications to low dimensional topology, as it can be the characterization
of Stein manifolds \cite{Otto,Stein}.
Also, the theory of contact structures is linked to many other geometric backgrounds, as it is the case of symplectic geometry, Riemannian and complex geometry,
analysis and dynamics \cite{Blair, Godbillon}.

For both approaches, we give central importance to the construction of a vector field (said to be Hamiltonian) with a corresponding smooth function with respect
to a contact structure or a cosymplectic structure, for its subsequent use in the geometric Hamilton--Jacobi theory (HJ theory).

The HJ theory has proven its popularity given its equivalence to other theories of Classical Mechanics,
and its simple principal idea: a Hamiltonian vector field $X_{H}$ can be projected into the configuration manifold by means of a 1-form $dW$, then the integral curves of the projected
vector field $X_{H}^{dW}$ can be transformed into integral curves of $X_{H}$ provided that $W$ is a solution of the Hamilton--Jacobi equation \cite{Arnold,Gold,Kibble,LL,LiberMarle,Rund}. In the last decades, the HJ theory has been interpreted in modern geometric terms \cite{CGMMMLRR,CGMMMLRR1,LeonIglDiego,MarrSosa,Pauffer}
and has been applied in multiple settings: as nonholonomic \cite{CGMMMLRR,CGMMMLRR1,LeonIglDiego,LeonMarrDiego}, singular Lagrangian Mechanics \cite{LeonMarrDiegoVaq,LeonDiegoVaq2} and classical field theories \cite{LeonMarrDiego08, Cedric}.
This theory relies in the existence of lagrangian/legendrian submanifolds. The notion of lagrangian/legendrian submanifolds has gained a lot of applications in dynamics from their introduction by Tulczyjew \cite{tulzcy1,tulzcy2}. We show how these submanifolds
are used to extend the geometric theory of the Hamilton--Jacobi equation from different geometric backgrounds. We use particular cases of lagrangian/legendrian submanifolds
in different geometric frameworks.

%
%
%

The paper is organized as follows: Section 2, is dedicated to review fundamentals on Geometric Mechanics and notation which will be used throughout the paper.
In Section 3, we introduce the dynamics on contact and cosymplectic manifolds and illustrate their geometric characteristics.
Section 4 contains the theory of Lagrangian--Legendrian submanifolds which will be used along the forthcoming sections.
In Section 5, we propose a geometric Hamilton--Jacobi theory on cosymplectic manifolds and illustrate our result with an example, a case of time dependent
Hamiltonian system. In particular, this system is the well-known Winternitz--Smorodinsky oscillator, for which we obtain an explicit expression for the solution $\gamma$
of the Hamilton--Jacobi equation.
In similar fashion, we devote Section 6 to propose a geometric Hamilton--Jacobi equation for a contact manifold. We also illustrate our result through an example,
which is the case of a classical Hamiltonian with kinetic and potencial term, accompanied by a dissipative term.




\section{Geometric Mechanics: Fundamentals}
We hereafter assume all mathematical objects to be $C^{\infty}$ and globally defined. Manifolds are considered connected. This permit us to
omit technical details while highlighting the main aspects of our theory.


A classical Hamilton system is given by a Hamilton function $H(q^i,p_i)$, where $q^i$ are the positions in a configuration manifold $Q$ and $p_i$ are the conjugated
momenta. The Hamiltonian can be interpreted as total energy of the system $H=T+V$.
We compute the differential of the function, 
$$dH=\frac{\partial H}{\partial q^i}dq^i+\frac{\partial H}{\partial p_i}dp_i$$
 and write the equation
\begin{eqnarray*}
\left. X_H = \left( \begin{array}{ccc}
0&I_n\\
-I_n&0 \end{array} \right) \left(
\begin{array}{c}
\frac{\partial H}{\partial q^i}\\
\frac{\partial H}{\partial p_i}
\end{array}
\right)
\right.
\end{eqnarray*}
where $I_n$ is the identity matrix of order $n$. The above matrix is called a symplectic matrix.
The vector field $X_H$ is called a Hamiltonian vector field and its integral curves $(q^i(t),p_i(t))$ are the Hamilton equations. 
\begin{equation}\label{hamileq7}
\left\{\begin{aligned}
 {\dot q}^i&=\frac{\partial H}{\partial p_i},\\
 {\dot p}_i&=-\frac{\partial H}{\partial q^i}
 \end{aligned}\right.
 \end{equation}
for all $i=1,\dots,n.$
We can define a Poisson bracket of two functions as 
$$\{f,g\}=\sum_{i=1}^n\left(\frac{\partial}{\partial q^i}\frac{\partial}{\partial p_i}-\frac{\partial}{\partial p_i}\frac{\partial}{\partial q^i}\right)$$
which is bilinear, skew symmetric and fulfils the Jacobi identity
\begin{equation*}
 \{f,\{g,h\}\}+\{f,\{g,h\}\}+\{h,\{f,g\}\}=0,\quad \forall f,g,h\in C^{\infty}(Q)
\end{equation*}
\noindent
The symplectic two form 
\begin{equation}\label{symp}
\omega_Q=dq^i\wedge dp_i
\end{equation}
 has the associated symplectic matrix above. It is skew-symmetric and closed.
We can rewrite the Hamilton equations \eqref{hamileq7} in a compact, geometric form
\begin{equation}
 \iota_{X_H}\omega_Q=dH
\end{equation}

A {\it symplectic manifold} is a pair $(M,\omega)$ such that the two-form $\omega$ is regular (that is, $\omega^n\neq 0$) and closed. Then, $M$ has even dimension, say $2n$. 
The Darboux theorem states that given a symplectic manifold $(M,\omega)$ we can find Darboux coordinates $(q^i,p_i)$ such that the symplectic form
is written as in \eqref{symp}.
Indeed, any symplectic manifold is locally equivalent to the cotangent bundle $T^{*}Q$ of a configuration manifold $Q$.

Given a configuration manifold $Q$, its cotangent bundle $T^{*}Q$ is the phase space.
We consider the canonical projection $\pi_Q:T^{*}Q\rightarrow Q$. 
From the Poisson bracket, we can define a canonical 2-contravariant tensor such that $\Lambda_Q(df,dg)=\{f,g\}$, for all $f,g\in C^{\infty}(T^{*}Q)$. In Darboux coordinates it reads
\begin{equation}
\Lambda_Q=\sum_{i=1}^n \frac{\partial}{\partial q^i}\wedge \frac{\partial}{\partial p_i}
\end{equation}
that we call a Poisson bivector. It is the contravariant version of the symplectic form. Furthermore, on $T^{*}Q$, we consider the so-called Liouville form $\theta_Q=p_idq^i$ such that $\omega_Q=-d\theta_Q$.

\medskip

Now, we briefly recall some main aspects of Lagrangian Mechanics.
Let $L(q^i,\dot{q}^i)$ be a Lagrangian function, with $(q^i)$ the generalized coordinates on the manifold $Q$ and $(\dot{q}^i)$ are the generalized momenta. The Hamilton's principle
produces the Euler--Lagrange equations
\begin{equation}
 \frac{d}{dt}\left(\frac{\partial L}{\partial \dot{q}^i}\right)-\frac{\partial L}{\partial q^i}=0, \quad \forall i=1,\dots,n.
\end{equation}
A geometric version of these equations can be obtained.  We consider the tangent bundle $TQ$ and the canonical projection $\tau_Q:TQ\rightarrow Q$.
Consider a Lagrangian function $L:TQ\rightarrow \mathbb{R}$ and we define the vertical endomorphism $S=\frac{\partial}{\partial \dot{q}^i}\otimes dq^i$.
We have the Cartan 1-form $\theta_L=S^{*}(dL)=\frac{\partial L}{\partial {\dot q}^i}dq^i$
and the Cartan two-form $\omega_L=-d\theta_L$, and the energy
$$E_L=\Delta(L)-L\in C^{\infty}(TQ).$$
The operator $\Delta$ is defined as $\Delta=\dot{q}^i\frac{\partial}{\partial \dot{q}^i}$, and it is known as the {\it Liouville vector field} \cite{Crampin,Klein}.
From here, we recover the classical expressions
$$\omega_L=dq^i\wedge dp_i,\quad \text{such that}\quad p_i=\frac{\partial L}{\partial \dot{q}^i},\quad E_L=\dot{q}^ip_i-L.$$

We define the Hessian matrix
\begin{equation}
 \left(W_{ij}\right)=\left(\frac{\partial^2 L}{\partial \dot{q}^i \partial \dot{q}^j}\right)
\end{equation}
The Lagrangian $L$ is said to be regular if the Hessian matrix is invertible. 
In this case, the Lagrange equations can be written as
\begin{equation}
 \iota_{\xi_L}\omega_L=dE_L
\end{equation}
whose solution $\xi_L$ is called a Euler--Lagrange vector field. 
The vector field $\xi_L$ is a second-order differential equation which implies that its integral curves are tangent lifts of their projections
on the configuration manifold $Q$. These projections are called the solutions of $\xi_L$ and are just the solutions of the Euler--Lagrange equations \cite{LeonRodri}.

We denote the Legendre transformation as $FL:TQ\rightarrow T^{*}Q$ the fibered mapping, that is $\pi_Q\circ FL=\tau_Q$. We say that the Lagrangian is hyperregular if the Legendre transform $FL(q^i,\dot{q}^i)=(q^i,p_i)$,
where $p_i=\frac{\partial L}{\partial \dot{q}^i}$ is the conjugate momenta, is a global diffeomorphism.
This is the usual case in Mechanics, where $L=T-V$ with $T$ being the kinetic energy defined by a Riemannian metric $g$ on $Q$ and $V:Q\rightarrow \mathbb{R}$ is the potential.
Then, the Hamiltonian is simply retrieved as $H=E_L \circ FL^{-1}$.

The Hamiltonian vector field is obtained by $X_H=FL_{*}(\xi_L)$ and fulfills $\iota_{X_H}\omega_Q=dH$.
In this way, we establish the connection between the Euler--Lagrange equations and the Hamilton equations.

\medskip

\medskip
The Hamilton equations \eqref{hamileq7} can be equivalently be solved with the aid of the Hamilton--Jacobi equation.
 The Hamilton Jacobi theory consists on finding a {\it principal function}  $S(t,q^i)$, that fulfils
\begin{equation}\label{tdepHJ}
 \frac{\partial S}{\partial t}+H\left(q^i,\frac{\partial S}{\partial q^i}\right)=0
\end{equation}
where $H=H(q^i,p_i)$ is the Hamiltonian function of the system. Equation \eqref{tdepHJ} is referred to as the {\it Hamilton--Jacobi equation.}
 If we set the principal function to be separable in the
time variable $S=W(q^1,\dots,q^n)-Et$
where $E$ is the total energy of the system, then \eqref{tdepHJ} will now read \cite{AbraMars,Gold}

\begin{equation}\label{HJeq1}
 H\left({q}^i,\frac{\partial W}{\partial {q}^i}\right)=E.
\end{equation}
The Hamilton--Jacobi equation is a useful intrument to solve the Hamilton equations for $H$.
Indeed, if we find a solution $W$ of \eqref{HJeq1}, then any solution of the Hamilton
equations is retrieved by taking ${p}_i=\partial W/\partial {q}^i.$

Our aim in this paper is the geometric Hamilton--Jacobi approach.
Geometrically, the HJ theory can be reformulated as follows. If a Hamiltonian
vector field $X_{H}$ can be projected into the configuration manifold by means of a 1-form $dW$, then the integral curves of the projected
vector field $X_{H}^{dW}$can be transformed into integral curves of $X_{H}$ provided that $W$ is a solution of \eqref{HJeq1}. 
This explanation can be represented by the following diagram

\[
\xymatrix{ T^{*}Q
\ar[dd]^{\pi} \ar[rrr]^{X_H}&   & &TT^{*}Q\ar[dd]^{T\pi}\\
  &  & &\\
 Q\ar@/^2pc/[uu]^{dW}\ar[rrr]^{X_H^{dW}}&  & & TQ}
\]

\bigskip

This implies that $(dW)^{*}H=E$, with $dW$ being a section of the cotangent bundle. In other words, we are looking for a section $\alpha$ of $T^{*}Q$ 
such that $\alpha^{*}H=E$. As it is well-known, the image of a one-form is a lagrangian submanifold of $(T^{*}Q, \omega_Q)$ if and only if $d\alpha=0$ \cite{AbraMars}.
That is, $\alpha$ is locally exact, say $\alpha=dW$ on an open subset around each point.


\section{Cosymplectic and contact structures}
A {\it Jacobi structure} is the triple $(M,\Lambda, Z)$, where $Z$ is a vector field and $\Lambda$ is a
skew-symmetric bivector such that they fulfil the following integrability conditions
\begin{equation}\label{jaccond}
 [\Lambda,\Lambda]=2Z\wedge \Lambda,\qquad \mathcal{L}_Z\Lambda=0.
\end{equation}
We have the morphism $\sharp:T^{*}M\rightarrow TM$ defined as $\langle \sharp(\alpha),\beta \rangle=\Lambda(\alpha,\beta)$, for $\alpha,\beta \in \Omega^{1}$.
Vector fields associated with functions  $f\in \mathcal{F}(M)$ are defined as
\begin{equation*}
 X_f=\sharp(df)+fZ,
\end{equation*}
where we have denoted by $\mathcal{F}(M)$ the algebra of smooth functions on $M$.
The {\it characteristic distribution} $\mathcal{C}$ of $(M,\Lambda,Z)$ is a subset of $TM$ generated by the
values of all the vector fields $X_f$. This characteristic distribution $\mathcal{C}$ is defined by $\Lambda$ and $Z$, that is,
\begin{equation}
 \mathcal{C}_p=\sharp_p(T^{*}_pM)+<Z_p>,\quad \forall p\in M
\end{equation}
where $\sharp_p:T_p^{*}M\rightarrow T_pM$.
The fiber $\mathcal{C}_p=\mathcal{C}\cap T_{p}M$ of the characteristic distribution $\mathcal{C}$ over $p$ is the vector subspace of $T_pM$ generated by
$Z(p)$ and the image of the linear mapping $\sharp:T_p^{*}M\rightarrow T_pM$. The distribution is said to be transitive if the characteristic distribution
is the whole tangent bundle $TM$. Transitive Jacobi manifolds, according to the parity of their dimension, are either locally conformally symplectic or equipped
with a contact one form. They also include Poisson manifolds as particular cases \cite{Marle}. The Poisson bracket is a derivation for the ordinary product of functions and $Z=0$ vanishes identically. 
The cosymplectic case is a particular case of Poisson manifolds.
 If we drop the integrability conditions, we say we have an {\it almost Jacobi manifold.} Or almost Poisson if it is the case.
\begin{definition}
 Given two Jacobi manifolds $(M_1,\Lambda_1,Z_1)$ and $(M_2,\Lambda_2,Z_2)$ we say that the map $\phi:{M_1}\rightarrow M_2$ is a {\it Jacobi map}
if given two functions $f,g$ that are $C^{\infty}$ on $M_2$,
\begin{equation*}
 \{f\circ \phi,g\circ \phi\}_{M_1}=\{f,g\}_{M_2}\circ \phi
\end{equation*}

\end{definition}

\begin{theorem}
The characteristic distribution of a Jacobi manifold $(M,\Lambda,Z)$ is completely integrable in the sense of Stefan--Sussmann \cite{Stefan, Sussmann}, thus
$M$ defines a foliation whose leaves are not necessarily of the same dimension, and it is called the {\it characteristic foliation}. Each leaf has a unique
transitive Jacobi structure such that its canonical injection into $M$ is a Jacobi map.
Each leaf defines
\begin{enumerate}
 \item A locally conformally symplectic manifold if the dimension is even.
\item A manifold equipped with a contact one-form if its dimension is odd.
\end{enumerate}

\end{theorem}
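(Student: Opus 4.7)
The plan is to proceed in three stages: establish the integrability of $\mathcal{C}$ in the Stefan--Sussmann sense, pass to the induced structure on each leaf, and then classify that structure according to the parity of the leaf's dimension.

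First I would show that the family $\mathcal{X}=\{X_f:f\in\mathcal{F}(M)\}\cup\{Z\}$ is closed under Lie brackets in the generalised sense required by Stefan--Sussmann. Using the integrability conditions \eqref{jaccond}, a direct computation gives the identities
\begin{equation*}
[X_f,X_g]=X_{\{f,g\}},\qquad [Z,X_f]=X_{Zf},
\end{equation*}
where $\{f,g\}=\Lambda(df,dg)+fZg-gZf$ is the associated Jacobi bracket; here the first identity uses $[\Lambda,\Lambda]=2Z\wedge\Lambda$ and the second uses $\mathcal{L}_Z\Lambda=0$. Because $\mathcal{C}_p$ is by definition the span of the values $X_f(p)$ and $Z(p)$, these bracket relations show that $\mathcal{C}$ is invariant under the flows of its generators, which is the Stefan--Sussmann criterion for integrability of a singular distribution. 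Hence $M$ decomposes into a generalised foliation by maximal integral submanifolds of $\mathcal{C}$.

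Second, for a leaf $\mathcal{L}$, I would verify that restriction produces a well-defined Jacobi structure $(\Lambda_\mathcal{L},Z_\mathcal{L})$. Since $\mathcal{C}$ is preserved by the $X_f$ and by $Z$, their flows preserve $\mathcal{L}$, so $Z|_\mathcal{L}$ is tangent to $\mathcal{L}$ and $\sharp$ restricts to a morphism $T^*\mathcal{L}\to T\mathcal{L}$. The restricted tensors satisfy \eqref{jaccond} on $\mathcal{L}$ because these are local identities involving only values of $\Lambda$, $Z$ and their derivatives in directions tangent to $\mathcal{C}$. By construction $T_p\mathcal{L}=\sharp_p(T_p^*\mathcal{L})+\langle Z_\mathcal{L}(p)\rangle$, so the induced Jacobi structure is transitive, and the inclusion $\mathcal{L}\hookrightarrow M$ is a Jacobi map because pull-back of functions commutes with the bracket on the characteristic directions.

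Third, I would split by parity. If $\dim\mathcal{L}=2n$ is even, transitivity together with skew-symmetry of $\Lambda_\mathcal{L}$ forces $\sharp_\mathcal{L}:T^*\mathcal{L}\to T\mathcal{L}$ to be an isomorphism (a rank count shows any nontrivial kernel of $\sharp$ would produce an odd-dimensional deficit that $Z$ alone cannot absorb), so we obtain a non-degenerate $2$-form $\omega_\mathcal{L}=(\sharp_\mathcal{L})^{-1}$; the Jacobi identities translate into $d\omega_\mathcal{L}=\vartheta\wedge\omega_\mathcal{L}$ with a closed $1$-form $\vartheta$ determined by $Z_\mathcal{L}$, which is exactly the locally conformally symplectic condition. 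If $\dim\mathcal{L}=2n+1$ is odd, $\sharp_\mathcal{L}$ has a one-dimensional kernel transverse to $Z_\mathcal{L}$; choosing $\eta$ with $\eta(Z_\mathcal{L})=1$ and $\ker\sharp_\mathcal{L}=\langle\eta\rangle$ and using the integrability conditions produces $\eta\wedge(d\eta)^n\neq 0$, i.e.\ a contact form.

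The main obstacle I expect is the first step: carefully handling the singular, non-constant-rank nature of $\mathcal{C}$ so that the bracket computations actually furnish the Stefan--Sussmann invariance property (rather than the stronger Frobenius involutivity, which need not hold pointwise when ranks jump). The parity dichotomy in the last step is essentially a linear-algebra statement on each tangent space once transitivity is in hand.
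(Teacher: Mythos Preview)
The paper does not actually prove this theorem; it is stated as a classical result (with references only to Stefan and Sussmann for the integrability criterion), and the paragraphs that follow merely recall the definitions of locally conformally symplectic and contact structures rather than deriving them from the Jacobi data on a leaf. The full statement is due to Kirillov and to Dazord--Lichnerowicz--Marle; see also the survey cited as \cite{Marle} in the paper.

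Your outline is the standard route to this result and is essentially correct. A few places would need tightening if written out in full. In the second step, the well-definedness of $\Lambda_\mathcal{L}$ requires checking that $\sharp$ annihilates the conormal bundle of $\mathcal{L}$; this follows from skew-symmetry of $\Lambda$ together with $\text{Im}(\sharp)\subset\mathcal{C}$, but you gloss over it. In the odd-dimensional case, the phrase ``kernel transverse to $Z_\mathcal{L}$'' is imprecise since $\ker\sharp_\mathcal{L}$ lives in $T^*\mathcal{L}$ while $Z_\mathcal{L}$ lives in $T\mathcal{L}$; what you need is that any nonzero $\alpha\in\ker\sharp_\mathcal{L}$ satisfies $\alpha(Z_\mathcal{L})\neq 0$, which does follow from transitivity. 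Finally, deducing $\eta\wedge(d\eta)^n\neq 0$ (and the analogous $d\omega_\mathcal{L}+\vartheta\wedge\omega_\mathcal{L}=0$ with $d\vartheta=0$ in the even case) from the integrability conditions is the most substantive computation and deserves more than a sentence; this is precisely where $[\Lambda,\Lambda]=2Z\wedge\Lambda$ is used in an essential way.
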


In the case of {\it locally conformally symplectic structures}, we have a manifold $M$ that is even dimensional and a pair $(\Omega, \eta)$
where $\Omega$ is a 2-form and $\eta$ is a one form, such that $\Omega$ is everywhere of rank $2n=\text{dim} M$ and satisfy
\begin{equation}
 d\eta=0,\quad d\Omega+\eta\wedge \Omega=0
\end{equation}
Let $Z$ be the vector field and $\Lambda$ be unique such that
\begin{equation}
 \iota_Z\Omega=\eta,\quad \iota_{\sharp(Z)}\Lambda=Z,
\end{equation}
where $\sharp$ is the morphism induced by $\Lambda:T^{*}M\rightarrow TM$.
Then, $(M,\Lambda,Z)$ is a Jacobi manifold. It can be easily verified by observing that on the neighborhood of each point, there exists a function
$f$ such that $\eta=df$ and the locally defined two-form $e^{f}\Omega$ is symplectic.

In the case of symplectic manifolds, we have a pair $(M,\Omega)$, where $\Omega$ is a symplectic two-form. We define the map  
\begin{equation}\label{inv}
\flat:TM\rightarrow T^{*}M \quad \text{such that}\quad  \flat(X)=\iota_X\Omega
\end{equation}
is an isomorphism and nondegenerate. We can define its inverse as $\sharp:T^{*}M\rightarrow TM$ such that
\begin{equation*}
 \{f,g\}=\Omega(\sharp(df),\sharp(dg))=\langle dg,\sharp(df)\rangle=-\langle df, \sharp(dg)\rangle
\end{equation*}
satisfies the Jacobi identity, and it is a class of Jacobi manifold.
In particular, a symplectic manifold $(M,\Omega)$ is a Poisson manifold. Its bivector $\Lambda$ is such that the associated vector bundle
map $\sharp:T^{*}M\rightarrow TM$ defined by $\langle df,\sharp (dg)\rangle=\Lambda(df,dg)$ is the inverse of \eqref{inv}.

From there, we shall depict two particular cases that we shall use along the paper.

An {\it almost cosymplectic structure} on a $2n+1$-dimensional manifold $M$ consists of the triple $(M,\eta,\Omega)$, where $\eta$ is a one-form and $\Omega$ is a two-form such that $\eta\wedge \Omega^n\neq 0$.
An almost cosymplectic manifold is equipped with the isomorphism of $C^{\infty}$-modules $\flat:\mathfrak{X}(M) \rightarrow \Lambda^{1}(T^{*}M)$
such that
\begin{equation}\label{betamorph}
 \flat(X)=i_Xd\eta+\eta(X)\eta
\end{equation}
where $X\in \mathfrak{X}(M)$, $\eta\in \Lambda^{1}(T^{*}M).$
\begin{theorem}
 If $(M,\eta,\Omega)$ is an almost cosymplectic structure, then there exists a unique vector field $\mathcal{R}$, the so-called Reeb vector field such that
\begin{equation}
 \iota_{\mathcal{R}}\eta=1,\quad \iota_{\mathcal{R}}\Omega=0.
\end{equation}
\end{theorem}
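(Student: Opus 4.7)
My plan is to derive the existence and uniqueness of the Reeb field $\mathcal{R}$ from a rank count on $\Omega$, using the nondegeneracy $\eta\wedge\Omega^{n}\neq 0$ in an essential way. First, the volume-form condition forces $\Omega^{n}\neq 0$ at every point, hence the 2-form $\Omega$ has rank exactly $2n$ everywhere on the $(2n+1)$-dimensional manifold $M$. Consequently
\[
\ker\Omega \;:=\; \{X\in TM : \iota_{X}\Omega = 0\}
\]
is a line subbundle of $TM$, and the existence problem reduces to selecting a suitable generator of this line.

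Next, I would verify that $\eta$ is nowhere zero on $\ker\Omega$: if $X\in\ker\Omega$ satisfied $\eta(X)=0$, then by the graded Leibniz rule for interior multiplication
\[
\iota_{X}(\eta\wedge\Omega^{n}) \;=\; \eta(X)\,\Omega^{n} \;-\; n\,\eta\wedge\iota_{X}\Omega\wedge\Omega^{n-1} \;=\; 0,
\]
contradicting the fact that $\eta\wedge\Omega^{n}$ is a top-degree nowhere-vanishing form (contraction of a volume form with a nonzero vector cannot vanish at a point). Hence $\eta|_{\ker\Omega}$ never vanishes, and there is a unique smooth section $\mathcal{R}$ of the line bundle $\ker\Omega$ normalized by $\eta(\mathcal{R})=1$; by construction it satisfies both defining conditions.

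Uniqueness is then automatic: for any two candidates $\mathcal{R}_{1},\mathcal{R}_{2}$, the difference $\mathcal{R}_{1}-\mathcal{R}_{2}$ lies in $\ker\Omega$ and is annihilated by $\eta$, so the previous step forces $\mathcal{R}_{1}=\mathcal{R}_{2}$. The only slightly delicate step is the contraction identity above, a routine application of Leibniz on $\Omega^{n}=\Omega\wedge\cdots\wedge\Omega$; beyond that, the argument is pointwise linear algebra driven by the volume-form hypothesis. I do not expect the alternative route via the isomorphism $\flat$ of \eqref{betamorph} to be shorter, since setting $\mathcal{R}:=\flat^{-1}(\eta)$ still requires ruling out the spurious case $\eta(\mathcal{R})=0$, which collapses to essentially the same rank argument.
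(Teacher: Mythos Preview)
Your argument is correct and is the standard intrinsic proof. In the paper, however, this particular theorem is stated \emph{without proof}: the surrounding text only remarks, in the subsequent definition of cosymplectic manifolds, that $\mathcal{R}=\flat^{-1}(\eta)$, and the only Reeb-field argument actually written out is for the contact case a few lines below, where the authors pass to Darboux-type coordinates $(t,x_{1},\dots,y_{n})$ with $\eta|_{U}=dt+x_{1}\,dx_{2}+\cdots$ and check that $\partial/\partial t$ satisfies the two conditions. That coordinate approach would not carry over to the general almost-cosymplectic statement, since neither $\eta$ nor $\Omega$ is assumed closed and no normal-form chart is available; your pointwise rank count is therefore the appropriate level of generality here. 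Your closing remark about the $\flat$-route is also on target: as written in the paper, $\flat(X)=\iota_{X}d\eta+\eta(X)\eta$ involves $d\eta$ rather than $\Omega$, so in the bare almost-cosymplectic setting one cannot simply invoke $\flat^{-1}$ and is driven back to exactly the linear-algebra argument you give.
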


\begin{definition}
We say that an almost cosymplectic structure $(M,\Omega,\eta)$ is a {\it cosymplectic structure} if $d\eta=0$ and $d\Omega=0$.
A cosymplectic manifold is equipped with the $\flat$ isomorphism in \eqref{betamorph} and the Reeb vector field is retrieved as $\mathcal{R}=\flat^{-1}(\eta).$
\end{definition}

\begin{definition}
 We say that a pair $(M,\eta)$ is a {\it contact structure} if is an almost cosymplectic structure and $\Omega=d\eta.$
\end{definition}
From here, we refer to $\eta$ as a contact form and $(M,\eta)$ a contact manifold.
Given a contact manifold $M,\Omega=d\eta$ and the Reeb vector field $\mathcal{R}$. Let $\sharp:T^{*}M\rightarrow TM$ be the vector bundle
map such that for each $\alpha$ in $T^{*}M$, we have
\begin{equation}
 \iota_{\sharp(\alpha)}\eta=0,\quad \iota_{\sharp(\alpha)}d\eta=-(\alpha-<\alpha,\mathcal{R}>\eta)
\end{equation}
for all one-form $\alpha$ on $M$.
We define the 2-tensor $\Lambda$ by
\begin{equation}
 \Lambda(\alpha,\beta)=<\beta,\sharp(\alpha)>=-<\alpha,\sharp(\beta)>,\quad \alpha,\beta\in T^{*}M
\end{equation}
Then the triple $(M,\Lambda,\mathcal{R})$ defines a Jacobi structure determined by the contact form $\eta$.

Let us now consider the contact structure $(M,\eta)$. We have the following theorem.

\begin{theorem}
If $(E,\eta)$ is a contact structure, then there exists a unique vector field $\mathcal{R}$, the so-called Reeb vector field
such that
\begin{equation}
 i_{\mathcal{R}}d\eta=0,\quad i_{\mathcal{R}}\eta=1.
\end{equation}
\end{theorem}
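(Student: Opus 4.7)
The plan is to reduce this statement to the preceding theorem (existence and uniqueness of the Reeb vector field for almost cosymplectic structures) and then verify that the contact condition $\Omega = d\eta$ forces the specific form stated here. The key observation is that a contact structure $(E,\eta)$ is by definition an almost cosymplectic structure with $\Omega = d\eta$, so the $(2n+1)$-form $\eta \wedge (d\eta)^n$ is a volume form, and the previous theorem already supplies a unique vector field $\mathcal{R}$ with $\iota_{\mathcal{R}}\eta = 1$ and $\iota_{\mathcal{R}}\Omega = 0$. Substituting $\Omega = d\eta$ then gives $\iota_{\mathcal{R}} d\eta = 0$ immediately.

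If one prefers a direct argument rather than invoking the earlier result, I would proceed pointwise. Fix $p \in E$; since $\eta \wedge (d\eta)^n|_p \neq 0$, the $2$-form $d\eta|_p$ has maximal rank $2n$ on the $(2n+1)$-dimensional tangent space, so its kernel is a line $\ell_p \subset T_p E$. I would then show $\eta|_{\ell_p} \not\equiv 0$: if $v$ spans $\ell_p$ and $\eta(v) = 0$, one checks that $\eta \wedge (d\eta)^n$ would annihilate every $(2n+1)$-tuple containing $v$ and hence would vanish at $p$, contradicting the volume-form hypothesis. Therefore there is a unique $\mathcal{R}_p \in \ell_p$ with $\eta(\mathcal{R}_p)=1$, and by construction $\iota_{\mathcal{R}_p}d\eta = 0$.

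For smoothness and uniqueness globally, the cleanest route is to invoke the isomorphism $\flat : \mathfrak{X}(E) \to \Lambda^1(E)$ from equation \eqref{betamorph}, $\flat(X) = \iota_X d\eta + \eta(X)\eta$, which is guaranteed to be an isomorphism in the almost cosymplectic setting. Setting $\mathcal{R} := \flat^{-1}(\eta)$ produces a smooth vector field, and then evaluating $\flat(\mathcal{R}) = \eta$ against $\mathcal{R}$ itself gives $\eta(\mathcal{R})\,\eta(\mathcal{R}) = \eta(\mathcal{R})$, forcing $\eta(\mathcal{R}) = 1$ on the open set where $\mathcal{R}$ is nonzero (and one shows $\mathcal{R}$ vanishes nowhere, since $\eta \neq 0$). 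Plugging $\eta(\mathcal{R}) = 1$ back into $\flat(\mathcal{R}) = \eta$ yields $\iota_{\mathcal{R}} d\eta = 0$. Uniqueness follows because any other vector field satisfying both conditions is also a $\flat$-preimage of $\eta$, and $\flat$ is injective.

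The only subtle step is verifying that $\flat$ is genuinely an isomorphism (rather than just a morphism) under the contact hypothesis; this is the heart of the matter and amounts to a linear-algebra argument at each fiber using $\eta \wedge (d\eta)^n \neq 0$. I expect this to be the main technical obstacle, but it is standard and already embedded in the almost cosymplectic framework introduced just above, so in practice I would simply cite the almost cosymplectic Reeb theorem and specialize $\Omega = d\eta$.
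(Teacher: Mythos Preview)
Your proposal is correct, and in fact cleaner than what the paper does. The paper argues via local Darboux-type coordinates: it asserts that around each point one can write $\eta|_U = dt + x_1\,dx_2 + \cdots + y_{n-1}\,dy_n$, verifies that $\partial/\partial t$ satisfies both conditions, and then appeals to the fact that the kernel of $d\eta|_U$ is spanned by $\partial/\partial t$ to conclude uniqueness.

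Your first route---simply specializing the almost cosymplectic Reeb theorem stated just above with $\Omega = d\eta$---is the most economical argument available given the paper's own setup, and avoids invoking the contact Darboux theorem, which is itself a nontrivial result (and in many treatments is proved \emph{using} the Reeb field, so there is a mild circularity risk in the paper's version). Your alternative pointwise linear-algebra argument together with $\mathcal{R} = \flat^{-1}(\eta)$ is more self-contained still and makes smoothness and global uniqueness transparent. The paper's coordinate approach, by contrast, buys you an explicit local expression for $\mathcal{R}$ but at the cost of importing a normal-form theorem that is heavier than the statement being proved.
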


\begin{proof}
We can choose a local coordinate system $(t,x_1,\dots,x_{n},y_1,\dots,y_n)$ on $E$, a point $p\in E$ and an open subset $U\subset E$ in
the neighborhood $p$, we can express the contact one-form as
\begin{equation*}
 \eta|_{U}=dt+x_1dx_2+\dots+y_{n-1}dy_{n}.
\end{equation*}
The Reeb vector field is $\mathcal{R}=\frac{\partial}{\partial t}$, because
\begin{equation}
  \left(\eta|_{U}\right)\left(\frac{\partial}{\partial t}\right)=1,\qquad \iota_{\frac{\partial}{\partial t}}\Omega=0
\end{equation}
Given that the associated system with $\Omega|_U$ is $\frac{\partial}{\partial t}$, this demonstrates the existence of a unique Reeb vector field $\mathcal{R}=\frac{\partial}{\partial t}.$

\end{proof}

\section{Lagrangian--Legendrian submanifolds}

Let $(M,\Lambda,Z)$ be a Jacobi manifold with characteristic distribution $\mathcal{C}$.

\begin{definition}
 A submanifold $N$ of a Jacobi manifold $(M,\Lambda,Z)$ is said to be a {\it lagrangian-legendrian submanifold} if the following equality holds
\begin{equation}
 \sharp (TN^{\circ})=TN \cap \mathcal{C},
\end{equation}
where $TN^{\circ}$ denotes the annihilator of $TN$.

\end{definition}

Assume that $(M,\Lambda)$ is a transitive Poisson manifold, that implies $\mathcal{C}=TM$. Then, we have that a submanifold
$N$ of $M$ is a lagrangian submanifold if and only if
\begin{equation}
 \sharp(TN^{\circ})=TN
\end{equation}

For a tangent vector bundle $(TM,\Omega)$, we consider a submanifold $N\subset M$. We define the {\it $\Omega$-orthogonal complementary} of $TN$ as
\begin{equation}
TN^{\bot}=\{\xi\in TN|\quad \Omega(\xi,\chi)=0,\quad \forall \chi \in TN\}
\end{equation}
We say that a submanifold $N$ is {\it isotropic} if $TN\subset TN^{\bot}$, that is $\Omega(\xi,\chi)=0, \forall \xi,\chi\in N$. We say that it is {\it lagrangian}
if $TN$ is isotropic and has an isotropic complementary. That is, $TM=TN\oplus TN^{\bot}$. Now, the following assertions are equivalent,
\begin{enumerate}
 \item $N$ is lagrangian
\item $TN=TN^{\bot}$
\item $TN$ is isotropic and $\text{rank}(TN)=\frac{1}{2}\text{rank}(TM)$ (with rank we refer to the rank of the vector bundle).
\end{enumerate}
As a consequence, we characterize a lagrangian submanifold by checking if it has half dimension of $M$ and $\Omega|_{TN}=0$.

\begin{definition}
 We say that a submanifold $N\subset M$ of a cosymplectic manifold $(M,\eta,\Omega)$ is {\it lagrangian} if
\begin{equation}
 \sharp (TN^{\circ})=TN
\end{equation}
\end{definition}

\begin{definition}
We say that a submanifold $N$ of a contact manifold $(M,\eta)$ is {\it legendrian} if the following condition is fulfilled
\begin{equation}
 \sharp (TN^{\circ})=TN
\end{equation}
\end{definition}

\begin{proposition}
A submanifold $N$ of a contact manifold $(M,\eta)$ is a {\it legendrian} submanifold if and only if it is an integral manifold of maximal dimension $n$ of the
distribution $\eta=0$.
\end{proposition}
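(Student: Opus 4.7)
The plan is to exploit the explicit description of the morphism $\sharp$ coming from the Jacobi structure associated to the contact form $\eta$. The key preliminary fact is that $\sharp$ has image contained in $\ker\eta$ (the contact distribution, of rank $2n$) and kernel spanned pointwise by $\eta$ itself: the first follows immediately from $\iota_{\sharp(\alpha)}\eta=0$; for the second, if $\sharp(\alpha)=0$ then $\alpha-\langle\alpha,\mathcal{R}\rangle\eta=0$, so $\alpha$ is proportional to $\eta$, and conversely $\sharp(f\eta)=0$ since $\iota_{\sharp(f\eta)}d\eta=-(f\eta-f\eta)=0$ together with the nondegeneracy of $d\eta$ on $\ker\eta$ forces $\sharp(f\eta)$ to vanish. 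Thus $\operatorname{rank}\sharp=2n$ with kernel $\langle\eta\rangle$.

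For the forward direction, assume $\sharp(TN^{\circ})=TN$. Since $\operatorname{Im}\sharp\subset\ker\eta$, we obtain $TN\subset\ker\eta$, i.e.\ $\iota^{*}\eta=0$ for $\iota:N\hookrightarrow M$, which is exactly the statement that $N$ is an integral manifold of the distribution $\eta=0$. For the dimension, put $k=\dim N$. Because $\iota^{*}\eta=0$, the covector $\eta$ belongs to $TN^{\circ}$ at every point, so $\ker\sharp\cap TN^{\circ}=\langle\eta\rangle$ is one-dimensional. Therefore $\dim\sharp(TN^{\circ})=(2n+1-k)-1=2n-k$, and the equality $\sharp(TN^{\circ})=TN$ forces $2n-k=k$, i.e.\ $k=n$, which is the maximal value allowed by the contact condition.

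For the converse, suppose $N$ is an integral manifold of $\eta=0$ with $\dim N=n$. Since $\iota^{*}\eta=0$, also $\iota^{*}d\eta=0$, so $TN$ is an isotropic subspace of $(\ker\eta,d\eta|_{\ker\eta})$; as $d\eta$ is symplectic on the contact distribution and $\dim TN=n=\tfrac12\operatorname{rank}(\ker\eta)$, $TN$ is a Lagrangian subspace, so $TN=TN^{\perp_{d\eta}}$ inside $\ker\eta$. As before, $\eta\in TN^{\circ}$ gives $\dim\sharp(TN^{\circ})=n=\dim TN$, so it suffices to prove the inclusion $\sharp(TN^{\circ})\subset TN$. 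Take $\alpha\in TN^{\circ}$ and $v\in TN$. Then, using $v\in\ker\eta$ and $\alpha(v)=0$,
\begin{equation*}
d\eta(\sharp(\alpha),v)=\bigl(\iota_{\sharp(\alpha)}d\eta\bigr)(v)=-\alpha(v)+\langle\alpha,\mathcal{R}\rangle\eta(v)=0,
\end{equation*}
so $\sharp(\alpha)\in TN^{\perp_{d\eta}}=TN$. Combined with the dimension count, this gives $\sharp(TN^{\circ})=TN$.

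The main delicate point is the backward direction: one must combine the algebraic identity $\iota_{\sharp(\alpha)}d\eta=-(\alpha-\langle\alpha,\mathcal{R}\rangle\eta)$ with the fact that maximal integral manifolds of the contact distribution are automatically Lagrangian with respect to the induced symplectic form on $\ker\eta$. Once this Lagrangian/isotropic identification is noticed, the proof reduces to the short computation above together with a dimension match; the routine verification of the $\sharp$-kernel and $\sharp$-image is the only other ingredient.
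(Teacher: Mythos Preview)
Your proof is correct and follows the same approach as the paper's: both rely on the fact that the image of $\sharp$ lies in $\ker\eta$ and on identifying $T_xN$ as a Lagrangian subspace of the symplectic vector space $(\ker\eta_x,(d\eta)_x)$. Your version is considerably more detailed---you explicitly determine the kernel and rank of $\sharp$, carry out the dimension count via rank--nullity, and fully write out the converse, which the paper dispatches with ``reversing the arguments''---but the underlying argument is the same.
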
 

\begin{proof}
 Assume that $M$ has dimension $2n+1$. If a submanifold $N$ of $M$ is legendrian then the condition
\begin{equation*}
 \sharp(TN^{\circ})=TN
\end{equation*}
implies that $\eta|_{N}=0$. Moreover, $N$ has necessary dimension $n$, since $T_xN$ will be a lagrangian subspace
of the symplectic vector space $(\ker{\eta}_x,(d\eta)_x)$ for all $x\in N$. The converse is proved reversing the arguments.

\end{proof}

\section{Hamilton--Jacobi theory on cosymplectic manifolds}

\subsection{Geometric approach}
Consider the extended phase space $T^{*}Q\times \mathbb{R}$ and its canonical projections of the first and second factor, $\rho:T^{*}Q\times \mathbb{R}\rightarrow T^{*}Q$ and  $t:T^{*}Q\times \mathbb{R}\rightarrow \mathbb{R}$, respectively
 and a time-dependent Hamiltonian $H:T^{*}Q\times \mathbb{R}\rightarrow \mathbb{R}$. It can be described through the following diagram

\[
\xymatrix{ T^{*}Q\times \mathbb{R}
\ar[dd]^{\rho}  \ar[ddrr]^{t}   &   &\\
  &  & &\\
T^{*}Q & & \mathbb{R}}
\]

\bigskip
\noindent
We have canonical coordinates $\{q^i,p_i,t\}$ with $i=1,\dots,n$.
We consider the two-form on $T^{*}Q\times \mathbb{R}$ as $\Omega_H=-d\theta_H$
and $\theta_H=\theta_Q-Hdt$ where $\theta_Q$ is the canonical Liouville one-form. Hence, $\Omega_H=dq^i\wedge dp_i+dH\wedge dt$.
Then, we have a cosymplectic structure $(dt,\Omega_H).$
The corresponding Reeb vector field needs to satisfy $\iota_{\mathcal{R}}dt=1,\iota_{\mathcal{R}}\Omega_H=0$, then it reads
\begin{equation}\label{reebcosym}
 \mathcal{R}_H=\frac{\partial}{\partial t}+\sum_{i=1}^n\frac{\partial H}{\partial p_i}\frac{\partial}{\partial q^i}-\sum_{i=1}^n \frac{\partial H}{\partial q^i}\frac{\partial}{\partial p_i}.
\end{equation}

The corresponding classical Hamilton--Jacobi equations are
\begin{equation}\label{hamileq22}
\left\{\begin{aligned}
 {\dot q}^i&=\frac{\partial H}{\partial p_i},\\
 {\dot p}_i&=-\frac{\partial H}{\partial q^i}\\
{\dot t}&=1.
 \end{aligned}\right.
 \end{equation}

for all $i=1,\dots,n$. 
\noindent
We consider the fibration
$\pi: T^{*}Q\times \mathbb{R}\rightarrow Q\times \mathbb{R}$ and a section $\gamma$ of $\pi:T^{*}Q\times \mathbb{R} \rightarrow Q\times \mathbb{R}$, i.e., $\pi\circ \gamma=\text{id}_{Q\times \mathbb{R}}$. 
Also, we assume that $\gamma_t:Q\rightarrow T^{*}Q$ is a lagrangian submanifold of the cosymplectic manifold $(T^{*}Q\times \mathbb{R},dt,\Omega_H)$ for a fixed time,
\[
 \xymatrix{T^{*}Q\times \mathbb{R}\ar[d]^{\rho}& & &\\
 T^{*}Q \ar[d]^{\pi}& & &\\
Q\times \mathbb{R}\ar[d]& & &  \\
Q\ar@/^2pc/[uuu]^{\gamma_t}}
\]
that is $d\gamma_t=0$.

We can use $\gamma$ to project $\mathcal{R}_H$ on $Q\times \mathbb{R}$
just defining a vector field $\mathcal{R}^{\gamma}_H$ on $Q\times \mathbb{R}$ by
\begin{equation}\label{hjr}
 \mathcal{R}^{\gamma}_H=T_{\pi}\circ \mathcal{R}_H\circ \gamma
\end{equation}
The following diagram summarizes the above construction
\[
\xymatrix{ T^{*}Q\times \mathbb{R}
\ar[dd]^{\pi} \ar[rrr]^{\mathcal{R}_H}&   & &T(T^{*}Q\times \mathbb{R})\ar[dd]^{T_{\pi}}\\
  &  & &\\
Q\times \mathbb{R} \ar@/^2pc/[uu]^{\gamma}\ar[rrr]^{\mathcal{R}^{\gamma}_H}&  & & T(Q\times \mathbb{R})}
\]
\begin{definition}
 If $\alpha$ is a one-form, locally expressed as $\alpha=\alpha_i dq^i$, we designate by $\alpha^{V}$ the {\it vertical lift}
or vector fields associated with $\alpha$, defined by
\begin{equation*}
 \iota_{\alpha^V}\omega_Q=\alpha
\end{equation*}
Hence, the vector field takes the form 
\begin{equation}
\alpha^{V}=-\alpha_i\frac{\partial}{\partial p_i}
\end{equation}
\end{definition}

\begin{theorem}
 The vector fields $\mathcal{R}_H$ and $\mathcal{R}^{\gamma}_H$ are $\gamma$-related if and only if the following equation is satisfied
\begin{equation}\label{eqtheorem2}
[d(H\circ \gamma_t)]^{V}=\dot{\gamma}_q
\end{equation}
where $[\dots]^{V}$ denotes the vertical lift of a one-form on $Q$ to $T^{*}Q$.  Now $\dot{\gamma}_q$ is the tangent vector in a point $q$ associated with the curve
\[
\xymatrix{
 \mathbb{R}\ar[rr]\ar@/^2pc/[rrrrrr]^{\gamma_q} & &  Q\times \mathbb{R}\ar[rr] & &  T^{*}Q\times \mathbb{R}\ar[rr]^{\rho} & &  T^{*}Q}
\]

\end{theorem}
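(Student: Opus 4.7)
The plan is to work locally in the canonical coordinates $(q^i,p_i,t)$ on $T^{*}Q\times\mathbb{R}$ and make both conditions fully explicit as systems of PDEs on the momentum components of the section $\gamma$. Since $\gamma$ is a section of $\pi$ over $Q\times\mathbb{R}$, I will write it as $\gamma(q,t)=(q,\gamma_i(q,t),t)$, so that the hypothesis that $\gamma_t$ is lagrangian (i.e.\ $d\gamma_t=0$) translates into the symmetry of the spatial cross-derivatives
\[
\frac{\partial \gamma_i}{\partial q^j}=\frac{\partial \gamma_j}{\partial q^i}.
\]
This closedness identity is the hinge of the whole argument.

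First I would compute $\mathcal{R}_H^{\gamma}=T\pi\circ \mathcal{R}_H\circ\gamma$ from \eqref{reebcosym}: projecting to $Q\times\mathbb{R}$ simply drops the $\partial/\partial p_i$ components, giving $\mathcal{R}_H^{\gamma}=\partial/\partial t+(\partial H/\partial p_i)\circ\gamma\,\partial/\partial q^i$. Next I would push this forward by $T\gamma$, using the Jacobian of $\gamma$ expressed through $\partial \gamma_i/\partial t$ and $\partial \gamma_i/\partial q^j$, and compare the result with $\mathcal{R}_H\circ\gamma$. The $\partial/\partial t$ and $\partial/\partial q^i$ components match automatically; the $\gamma$-relatedness therefore reduces to the single vertical identity
\[
\frac{\partial \gamma_i}{\partial t}+\frac{\partial H}{\partial p_j}\frac{\partial \gamma_j}{\partial q^i}=-\frac{\partial H}{\partial q^i},
\]
where on the right I have already invoked the lagrangian symmetry to rewrite $\partial \gamma_i/\partial q^j$ as $\partial \gamma_j/\partial q^i$.

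Parallel to this, I would unfold the right-hand side of \eqref{eqtheorem2}. The vector $\dot\gamma_q$ is, by definition, tangent at $\gamma_q(t)$ to the fiber curve $t\mapsto \rho(\gamma(q,t))$ in $T^{*}Q$, hence vertical with coordinate expression $\dot\gamma_q=(\partial \gamma_i/\partial t)\,\partial/\partial p_i$. For the left-hand side, the chain rule applied to $H\circ\gamma_t$ at fixed $t$ yields
\[
d(H\circ\gamma_t)=\left(\frac{\partial H}{\partial q^i}+\frac{\partial H}{\partial p_j}\frac{\partial \gamma_j}{\partial q^i}\right)dq^i,
\]
and the vertical lift defined via $\iota_{\alpha^V}\omega_Q=\alpha$ changes $dq^i$ into $-\partial/\partial p_i$. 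Equating the two vertical vector fields then reproduces exactly the condition displayed above.

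The two implications thus collapse into the same coordinate identity, so the theorem follows once both sides are unfolded. The only delicate step is ensuring the two appearances of the mixed term $(\partial H/\partial p_j)(\partial \gamma/\partial q)$ match up; this is precisely where the lagrangian hypothesis $d\gamma_t=0$ is needed, and is where I would focus the careful book-keeping. An intrinsic reformulation, replacing the coordinate computation of $T\gamma\circ\mathcal{R}_H^\gamma-\mathcal{R}_H\circ\gamma$ by a contraction of $\Omega_H$ with this difference vector and invoking the nondegeneracy of $\omega_Q$ on the fibers of $\rho$, would give a coordinate-free version of the same argument if desired.
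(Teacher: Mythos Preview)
Your proposal is correct and follows essentially the same route as the paper's own proof: write the section in coordinates, compute $T\gamma(\mathcal{R}_H^{\gamma})$ via the Jacobian of $\gamma$, compare the $\partial/\partial p_j$ components with those of $\mathcal{R}_H$, and use the closedness $d\gamma_t=0$ to swap $\partial\gamma_i/\partial q^j$ with $\partial\gamma_j/\partial q^i$ so that the resulting identity matches the coordinate expansion of $[d(H\circ\gamma_t)]^V=\dot\gamma_q$. In fact your write-up is more explicit than the paper's, which leaves the matching of the two sides and the precise role of closedness as a ``straightforward'' exercise.
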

\begin{proof}
 The vector fields $\mathcal{R}_H$ and $\mathcal{R}^{\gamma}_H$ are $\gamma$ related if $T\gamma(\mathcal{R}^{\gamma}_H)=\mathcal{R}_H$. That is,
\begin{equation}\label{transf11}
T\gamma(\mathcal{R}^{\gamma}_H)=T\gamma\left(\frac{\partial}{\partial t}+\sum_{i=1}^n\frac{\partial H}{\partial p_i}\frac{\partial}{\partial q_i}\right)
\end{equation}
We choose a section $\gamma(q^i,t)=\gamma(q^i,\gamma^j(q^i,t),t)$ with $i,j=1,\dots,n$ such that the lift in the tangent bundle reads,
\begin{equation}\label{tangent2}
T\gamma\left(\frac{\partial}{\partial t}\right)=\frac{\partial}{\partial t}+\sum_{j=1}^n\frac{\partial \gamma^j}{\partial t}\frac{\partial}{\partial p_j},\quad T\gamma\left(\frac{\partial}{\partial q^i}\right)=\frac{\partial}{\partial q^i}+\sum_{j=1}^n\frac{\partial \gamma^j}{\partial q^i}\frac{\partial}{\partial p_j} 
\end{equation}
Introducing equations \eqref{tangent2} in equation \eqref{transf11}, it is straightforward to retrieve condition \eqref{eqtheorem2} if we use that 
$\gamma_t$ is closed for permuting indices in intermediate steps to obtain \eqref{eqtheorem2}.
\end{proof}
Equation \eqref{eqtheorem2} is known as a {\it Hamilton--Jacobi equation on a cosymplectic manifold.}
In local coordinates, 
\begin{equation}\label{lc}
 \frac{\partial \gamma^j}{\partial t}+\sum_{i=1}^n \frac{\partial H}{\partial p_i}\frac{\partial \gamma^j}{\partial q^i}+\frac{\partial H}{\partial q^j}=0.
\end{equation}

\subsection{Applications}

\subsubsection*{Winternitz--Smorodinsky oscillator}
We consider a Hamiltonian formalism for the one-dimensional {\it Winternitz--Smorodinsky oscillator} \cite{indio}. The following superintegrable Hamiltonian
retrives the Hamilton equations of a nonlinear oscillator which has been contemplated in \cite{sardontesis} and has multiple applications in both Classical and Quantum Mechanics \cite{ws}.
When $k=0$, we recover the well-known isotropic harmonic oscillator. It reads
\begin{equation}
 H=\frac{1}{2}\left(p^2+\frac{k}{q^2}\right)+\frac{1}{2}\omega(t)^2q^2
\end{equation}
The defined Reeb vector field in \eqref{reebcosym} applied in this case, reads
\begin{equation}
 \mathcal{R}=\frac{\partial}{\partial t}+p\frac{\partial}{\partial q}-\left(\omega(t)^2q-\frac{k}{q^3}\right)\frac{\partial}{\partial p},
\end{equation}

We consider a section $\gamma(q,t)=(q,\gamma(q,t),t)$ (notice that we have abused language by denoting the triple $\gamma(q,t)=(q,\gamma(q,t),t)$ and its middle entry by $\gamma(q,t)$) such that the Hamilton--Jacobi equation reads
\begin{equation}\label{hjmp}
 \frac{\partial \gamma}{\partial t}+\gamma \frac{\partial \gamma}{\partial q}=\frac{k}{q^3}-\omega^2(t)q.
\end{equation}
and $\mathcal{R}^{\gamma}$ is
\begin{equation}
 \mathcal{R}^{\gamma}=\frac{\partial}{\partial t}+p\frac{\partial}{\partial q}.
\end{equation}
This equation can be solved by the method of characteristics \cite{Stephani}
\begin{equation}\label{char}
 dt=\frac{dq}{\gamma}=\frac{d\gamma}{\frac{k}{q^3}-\omega^2(t)q}.
\end{equation}
Integrating the equations along the section $\gamma$, we have that $p=\gamma$, then, we can solve system \eqref{char} whose
solutions result in,
\begin{align*}
 \frac{dq}{dt}&=\gamma,\\
\frac{d\gamma}{dt}&=\frac{k}{q^3}-\omega^2(t)q,
\end{align*}
which is a {\it Milne--Pinney equation} \cite{Milne1,Milne2} that has received a lot of expectation due to its ubiquity in Physics and Engineering. It is a model for propagation
of laser beams in nonlinear media and also applies to plasma dynamics. It has the expression
\begin{equation}
 \frac{d^2q}{dt^2}=\frac{k}{q^3}-\omega^2(t)q.
\end{equation}
Its solution can be expressed in terms of two solutions $y_1,y_2$ of the harmonic oscillator and two constants.
\begin{equation}
 q=\frac{\sqrt{2}}{|W|}\sqrt{C_1y_1^2+C_2y_2^2\pm \sqrt{4C_1C_2-kW^2y_1y_2}}
\end{equation}
where $W$ is the Wronskian $W=y_1\dot{y}_2-y_2\dot{y}_1$ of the two solutions $y_1$ and $y_2$ of the associated harmonic oscillator.
Then, $\gamma$ takes the form
\begin{equation}
 \gamma=\frac{d}{dt}\left(\frac{\sqrt{2}}{|W|}\sqrt{C_1y_1^2+C_2y_2^2\pm \sqrt{4C_1C_2-kW^2y_1y_2}}\right)
\end{equation}

\subsubsection*{A trigonometric system}
Let us consider the time-dependent Hamiltonian
\begin{equation}
 H=\frac{p^2}{2}+\frac{q^2}{2}+\alpha \sin{(wt)}\frac{q^2p^2}{2}.
\end{equation}
In our setting, the two-form $\Omega_H=dq\wedge dp+dH\wedge dt.$
The Reeb vector field reads
\begin{equation}
 \mathcal{R}_H=\frac{\partial}{\partial t}+\left(p+\alpha\sin{(wt)}q^2p\right)\frac{\partial}{\partial q}-\left(q+\alpha\sin{(wt)}p^2q\right)\frac{\partial}{\partial p}.
\end{equation}
We choose a lagrangian section $\gamma=(q,\gamma(q,t),t)$.
The $\mathcal{R}_H^{\gamma}$ field is
\begin{equation}
 \mathcal{R}^{\gamma}_H=\frac{\partial}{\partial t}+\left(p+\alpha\sin{(wt)}q^2p\right)\frac{\partial}{\partial q}. 
\end{equation}
If we impose \eqref{hjr} to be fulfilled, we need to compute the terms
\begin{equation}
 T\gamma\left(\frac{\partial}{\partial t}\right)=\frac{\partial}{\partial t}+\frac{\partial \gamma}{\partial t}\frac{\partial}{\partial p},\quad T\gamma\left(\frac{\partial}{\partial q}\right)=\frac{\partial}{\partial q}+\frac{\partial \gamma}{\partial q}\frac{\partial}{\partial p},
\end{equation}
The arising equation reads
\begin{equation}
 \frac{\partial \gamma}{\partial t}+\left(p+\alpha\sin{(wt)}q^2p\right)\frac{\partial \gamma}{\partial q}=q+\alpha\sin{(wt)}p^2q.
\end{equation}
This equation is a quasi-linear first-order PDE for a function $\gamma(q,t)$. It can be solved with the aid of the method of characteristics \cite{evans}
\begin{equation}
 dt=\frac{dq}{p+\alpha \sin{(wt)}q^2p}=\frac{d\gamma}{q+\alpha \sin{(wt)}p^2q}
\end{equation}
which turns in the following system of equations
\begin{align}\label{sysgp}
 \frac{dq}{dt}&=p(1+\alpha \sin{(wt)}q^2),\nonumber\\
\frac{d\gamma}{dt}&=q(1+\alpha \sin{(wt)}p^2).
\end{align}
Integrating the equations along the section $\gamma$, we have that $p=\gamma$, then, we can solve system \eqref{sysgp} whose
solutions result in
 \begin{equation}
  \gamma=\pm\frac{e^{2t}+2C_2}{\sqrt{-\alpha\sin{(wt)}e^{4t}+4\alpha \sin{(wt)}e^{2t}C_2-4\alpha \sin{(wt)}C_2^2+4e^{2t}C_1}}
 \end{equation}
where $C_1$ and $C_2$ are arbitrary constants of integration.

\section{Hamilton--Jacobi theory on contact manifolds}

We consider the extended phase space $T^{*}Q\times \mathbb{R}$  with canonical projections of the first and second variables
$\rho:T^{*}Q\times \mathbb{R}\rightarrow T^{*}Q$ and $t:T^{*}Q\times \mathbb{R}\rightarrow \mathbb{R}$. The Hamiltonian function is
$H:T^{*}Q\times \mathbb{R}\rightarrow \mathbb{R}$. It can be illustrated through the following diagram

\[
\xymatrix{ T^{*}Q\times \mathbb{R}
\ar[dd]^{\rho} \ar[ddrr]^{t}\\
  &  & &\\
T^{*}Q &  & \mathbb{R}}
\]

\bigskip
\noindent
We have local canonical coordinates $\{q^i,p_i,t\}, i=1,\dots,n$. 
The one form is $\eta=dt-\rho^{*}\theta_Q$.
The pair $(T^{*}Q\times \mathbb{R},\eta)$ is a contact manifold.
The Reeb vector field is $\mathcal{R}=\frac{\partial}{\partial t}$ that fulfills 
$$\iota_{\mathcal{R}}\eta=1,\quad \iota_{\mathcal{R}}d\eta=0.$$

Consider the fibration $\pi: T^{*}Q\times \mathbb{R}\rightarrow Q\times \mathbb{R}$.
To have dynamics, we consider a vector field \cite{mejicanos} defined by
\begin{equation*}
 \flat{(X_H)}=-(\mathcal{R}(H)+H)\eta+dH.
\end{equation*}
where $\flat$ is the isomorphism defined in \eqref{betamorph}. 
In particular, the vector field $X_H$ fulfills the following condition
\begin{equation}\label{1exph}
 \eta(X_H)=-H.
\end{equation}
In coordinates, $\eta=dt-\sum_{i=1}^n p_idq^i$ and $X_H$ satisfying condition \eqref{1exph} takes the form
{\begin{footnotesize}
\begin{equation}\label{1hvf}
 X_H=\sum_{i=1}^n\left(p_i\frac{\partial H}{\partial p_i}-\sum_{i=1}^nH\right)\frac{\partial}{\partial t}-\sum_{i=1}^n\left(p_i\frac{\partial H}{\partial t}+\frac{\partial H}{\partial q^i}\right)\frac{\partial}{\partial p_i}+\sum_{i=1}^n\frac{\partial H}{\partial p_i}\frac{\partial}{\partial q^i}
\end{equation}
\end{footnotesize}}

The reason for choosing $X_H$ as in \eqref{1hvf}, proposed in \cite{mejicanos}
resides in the retrieval of the {\it dissipation Hamilton--Jacobi equations}. Such equations read

\begin{equation}\label{hamileq}
\left\{\begin{aligned}
 {\dot q}^i&=\frac{\partial H}{\partial p_i},\\
 {\dot p}_i&=-\frac{\partial H}{\partial q^i}-p_i\frac{\partial H}{\partial t},\\
{\dot t}&=p_i\frac{\partial H}{\partial p_i}-H.
 \end{aligned}\right.
 \end{equation}
for all $i=1,\dots,n$. 

Consider $\gamma$ a section of $\pi:T^{*}Q\times \mathbb{R} \rightarrow Q\times \mathbb{R}$, i.e., $\pi\circ \gamma=\text{id}_{Q\times \mathbb{R}}$. We can use $\gamma$ to project $X_H$ on $Q\times \mathbb{R}$
just defining a vector field $X_{H}^{\gamma}$ on $Q\times \mathbb{R}$ by
\begin{equation}\label{hjpar}
 X_H^{\gamma}=T_{\pi}\circ X_{H}\circ \gamma
\end{equation}
The following diagram summarizes the above construction
\[
\xymatrix{ T^{*}Q\times \mathbb{R}
\ar[dd]^{\pi} \ar[rrr]^{X_H}&   & &T(T^{*}Q\times \mathbb{R})\ar[dd]^{T_{\pi}}\\
  &  & &\\
Q\times \mathbb{R} \ar@/^2pc/[uu]^{\gamma}\ar[rrr]^{X^{\gamma}_H}&  & & T(Q\times \mathbb{R})}
\]

Assume that $\gamma(Q\times \mathbb{R})$ is a legendrian submanifold, such that $\gamma_t$ is closed.

\begin{theorem}
 The vector fields $X_H$ and $X_H^{\gamma}$ are $\gamma$-related if and only if the following equation is satisfied
\begin{equation}\label{eqtheorem}
[d(H\circ \gamma)]^{V}=-H\dot{\gamma}_q
\end{equation}
where $[\dots]^{V}$ denotes the vertical lift of a one-form and $\dot{\gamma}_q$ is the tangent vector in a point $q$ associated with the curve
\[
\xymatrix{
 \mathbb{R}\ar[rr]\ar@/^2pc/[rrrrrr]^{\gamma_q} & &  Q\times \mathbb{R}\ar[rr] & &  T^{*}Q\times \mathbb{R}\ar[rr]^{\rho} & &  T^{*}Q}
\]

\end{theorem}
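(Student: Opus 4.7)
The plan is to prove this theorem by following the strategy of Theorem 4 for the cosymplectic case. The abstract $\gamma$-relatedness condition $T\gamma\circ X_H^\gamma=X_H\circ\gamma$ is translated into a local coordinate equation and repackaged intrinsically as \eqref{eqtheorem}.

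First, I would work in canonical coordinates $(q^i,p_i,t)$ and write a section as $\gamma(q,t)=(q,\gamma^j(q,t),t)$, using the tangent-map formulas analogous to \eqref{tangent2}. Substituting the explicit expression \eqref{1hvf} for $X_H$, one computes $X_H^\gamma=T_\pi\circ X_H\circ\gamma$ by setting $p_i=\gamma^i(q,t)$ and projecting away the $\partial/\partial p_i$ components; the resulting vector field on $Q\times\mathbb{R}$ inherits a distinctive $-H$ contribution to its $\partial/\partial t$ component, coming directly from the contact identity $\eta(X_H)=-H$.

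Next, I would evaluate $T\gamma(X_H^\gamma)$ and equate it to $X_H\circ\gamma$ component by component. The $\partial/\partial t$ and $\partial/\partial q^i$ coefficients match automatically by the construction of $X_H^\gamma$ as a projection, so $\gamma$-relatedness is equivalent to the equality of the $\partial/\partial p_j$ coefficients alone. Using the closedness of $\gamma_t$ (which is part of the legendrian hypothesis on $\gamma(Q\times\mathbb{R})$), one swaps $\partial\gamma^j/\partial q^i$ with $\partial\gamma^i/\partial q^j$ and recognises the combination $\sum_i(\partial H/\partial p_i)(\partial\gamma^i/\partial q^j)+\partial H/\partial q^j$ as the $j$-th coefficient of $d(H\circ\gamma_t)$ on $Q$; by the definition of vertical lift this is, up to sign, the $j$-th component of $[d(H\circ\gamma_t)]^V$.

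The main technical obstacle is the appearance of contact-specific cross terms proportional to $\gamma^j(\partial H/\partial t)$ and to $(\sum_i\gamma^i\,\partial H/\partial p_i)(\partial\gamma^j/\partial t)$, which have no analogue in the cosymplectic proof of Theorem 4. These terms must be regrouped with the help of the legendrian property of $\gamma(Q\times\mathbb{R})$ in order to produce the factor $-H$ multiplying $\dot\gamma_q$ on the right-hand side of \eqref{eqtheorem}. Because every reduction step — matching coefficients, swapping indices via closedness, and invoking the legendrian condition — is reversible, the same chain of equivalences also yields the converse implication, completing the proof.
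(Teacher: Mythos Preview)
Your proposal is correct and takes essentially the same approach as the paper: a direct local-coordinate computation of $T\gamma(X_H^\gamma)=X_H\circ\gamma$ using the tangent-map formulas for $T\gamma(\partial/\partial t)$ and $T\gamma(\partial/\partial q^i)$, followed by the closedness condition $d\gamma_t=0$ to swap indices and repackage the result as \eqref{eqtheorem}. The paper's own proof is in fact terser than yours---it simply declares the substitution ``straightforward'' once $d\gamma_t=0$ is imposed---so your explicit isolation of the contact-specific cross terms $\gamma^j\,\partial H/\partial t$ and $(\sum_i\gamma^i\,\partial H/\partial p_i)\,\partial\gamma^j/\partial t$ is a more honest accounting of what the computation actually involves, but the method is identical.
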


\begin{proof}
 The vector fields $X_H$ and $X_H^{\gamma}$ are $\gamma$ related if $T\gamma (X_H^{\gamma})=X_H$. That is,
\begin{equation}\label{transf1}
T\gamma (X_H^{\gamma})=\left(p_i\frac{\partial H}{\partial p_i}-H\right)T\gamma\left(\frac{\partial}{\partial t}\right)+\frac{\partial H}{\partial p_i}T\gamma\left(\frac{\partial}{\partial q^i}\right)=X_H
\end{equation}
The section in local coordinates is $\gamma(q^i,t)=\gamma(q^i,\gamma^j(q^i,t),t)$ with $i,j=1,\dots,n$ such that the lift in the tangent bundle reads,
\begin{equation}\label{tangent3}
T\gamma\left(\frac{\partial}{\partial t}\right)=\frac{\partial}{\partial t}+\sum_{j=1}^n\frac{\partial \gamma^j}{\partial t}\frac{\partial}{\partial p_j},\quad T\gamma\left(\frac{\partial}{\partial q^i}\right)=\frac{\partial}{\partial q^i}+\sum_{j=1}^n \frac{\partial \gamma^j}{\partial q^i}\frac{\partial}{\partial p_j},\\
\end{equation}
Introducing equations \eqref{tangent3} in equation \eqref{transf1}, it is straightforward to retrieve condition \eqref{eqtheorem} if 
a further condition on the one-form $\gamma$ is imposed. It is
\begin{equation}
 d\gamma_t=0
\end{equation}
That is, $\gamma_t$ is closed and fulfils the legendrian submanifold condition.
\end{proof}
Equation \eqref{eqtheorem} is known as a {\it Hamilton--Jacobi equation with respect to a contact structure.}
In local coordinates, 
\begin{equation}
 p_j\frac{\partial H}{\partial t}+\frac{\partial H}{\partial q^j}+\sum_{i=1}^n \left(p_i\frac{\partial H}{\partial p_i}-H\right)\frac{\partial \gamma^j}{\partial t}+\sum_{i=1}^n \frac{\partial H}{\partial p_i}\frac{\partial \gamma^j}{\partial q^i}=0
\end{equation}

%
%

\subsection{Applications}
Let us consider the Hamiltonian 
\begin{equation}
 H=\frac{p^2}{2m}+V(q)+\alpha S
\end{equation}
This is the corresponding Hamiltonian of a {\it damped oscillator} \cite{mejicanos} which is retrieved by \eqref{hamileq}.
Taking the Hamiltonian vector field as in \eqref{1hvf}, we have
\begin{equation}
 X_H=\left(\frac{p^2}{2m}-V(q)-\alpha S\right)\frac{\partial}{\partial S}-\left(\alpha p+ V'(q)\right)\frac{\partial}{\partial p}+\frac{p}{m}\frac{\partial}{\partial q}
\end{equation}
We choose a legendrian section $\gamma=(q,\gamma(q,S),S)$. And $X_H^{\gamma}$ reads
\begin{equation}
 X_H^{\gamma}=\left(\frac{p^2}{2m}-V(q)-\alpha S\right)\frac{\partial}{\partial S}+\frac{p}{m}\frac{\partial}{\partial q}
\end{equation}
Using \eqref{hjpar}, we need to perfom the computations
\begin{equation}
 T\gamma\left(\frac{\partial}{\partial S}\right)=\frac{\partial}{\partial S}+\frac{\partial \gamma}{\partial S}\frac{\partial}{\partial p},\quad T\gamma\left(\frac{\partial}{\partial q}\right)=\frac{\partial}{\partial q}+\frac{\partial \gamma}{\partial q}\frac{\partial}{\partial p}
\end{equation}
The Hamilton Jacobi equation reads
\begin{equation}\label{puf1}
 \left(\frac{p^2}{2m}-V(q)-\alpha S\right)\frac{\partial \gamma}{\partial S}+\frac{p}{m}\frac{\partial \gamma}{\partial q}+(p\alpha+V'(q))=0
\end{equation}
with $d\gamma_S=0$, that is $\gamma_S=\text{constant}$. Integrating the equations along the section $\gamma$, we have that $p=\gamma$, 
and setting the constant $\gamma_S=1$, then \eqref{puf1} can be rewritten as
\begin{equation}\label{puf2}
 \frac{\partial \gamma}{\partial q}+\frac{1}{2}\gamma+\alpha m+\frac{m}{\gamma}\left(V'(q)-V(q)-\alpha S\right)=0
\end{equation}
which can be solved as
\begin{equation}
 q=\frac{c_1}{\sqrt{c_1^2-2c_2}}\ln{\left(\frac{\gamma+c_1-\sqrt{c_1^2-2c_2}}{\gamma+c_1+\sqrt{c_1^2-2c_2}}\right)}-\ln{\left(\frac{1}{2}\gamma^2+c_1\gamma+c_2\right)}
\end{equation}
when $c_1^2>2c_2$, and
\begin{equation}
 q=\frac{2}{\sqrt{2c_2-c_1^2}}\tan^{-1}{\left(\frac{\gamma+c_1}{\sqrt{2c_2-c_1^2}}\right)}-\ln{\left(\frac{1}{2}\gamma^2+c_1\gamma+c_2\right)}
\end{equation}
when $c_2>\frac{c_1^2}{2}$, with
\begin{equation}
 c_1=\alpha m,\quad c_2=-m\left(V(q)-V'(q)+m\alpha S\right).
\end{equation}

\section{Conclusions}
We have developed a two-fold geometric Hamilton--Jacobi theory: for time dependent Hamiltonians through a cosymplectic geometric formalism and for dissipative
Hamiltonians through a contact geometry formalism.
 We have derived an expression
for the Hamilton--Jacobi equation and have applied our result to time dependent, unidimensional Winternitz--Smorodinsky Hamiltonian. The Hamilton Jacobi equation for the lagrangian section
$\gamma$ is a Milne--Pinney equation which can be integrated with the aid of the Lie systems theory.
Furthermore, we have developed a geometric Hamilton--Jacobi theory for Hamiltonians containing a dissipation term. We have derived an expression for the Hamilton--Jacobi
equation and have applied our result to a one dissipation parameter dependent Hamiltonian. This latter case has been developed along the lines of contact geometry.

\section*{Acknowledgements}
This work has been partially supported by MINECO MTM 2013-42-870-P and
the ICMAT Severo Ochoa project SEV-2011-0087.

\end{document}